\author{Tyler Meadows \\Department of Mathematics,\\
           University of Idaho,
           Moscow, Idaho 83844, USA\\
         \And Elissa J. Schwartz\\Department of Mathematics and Statistics, and School of Biological Sciences,\\ Washington State University, Pullman, Washington 99164, USA\\
}
\newtheorem{theorem}{Theorem}
\newtheorem{lemma}[theorem]{Lemma}
\newtheorem{prop}[theorem]{Proposition}
\title{A Model of Virus Infection with Immune Responses Supports Boosting CTL Response to Balance Antibody Response}
\begin{document}

\maketitle

\begin{abstract}
We analyze a within-host model of virus infection with antibody and CD$8^+$ cytotoxic T lymphocyte (CTL) responses proposed by Schwartz {\it et al.} (2013).  The goal of this work is to gain an overview of the stability of the biologically-relevant equilibria as a function of the model's immune response parameters.  We show that the equilibria undergo at most two forward transcritical bifurcations.  The model is also explored numerically and results are applied to equine infectious anemia virus infection. In order to arrive at stability of the biologically-relevant endemic equilibrium characterized by coexistence of antibody and CTL responses, the parameters promoting CTL responses need to be boosted over parameters promoting antibody production. This result may seem counter-intuitive (in that a weaker antibody response is better) but can be understood in terms of a balance between CTL and antibody responses that is needed to permit existence of CTLs. In conclusion, an intervention such as a vaccine that is intended to control a persistent viral infection with both immune responses should moderate the antibody response to allow for stimulation of the CTL response.
\end{abstract}

\keywords{Transcritical bifurcations \and Virus dynamics \and Immune system dynamics \and Equine Infectious Anemia Virus Infection}

\section{Introduction}
\label{intro}

Equine infectious anemia virus (EIAV) is an infection in horses that is similar to HIV (human immunodeficiency virus) in structure, genome, and life cycle \citep{Lerouxetal2004}. However, horses infected with EIAV do not develop AIDS, as do HIV-infected individuals without treatment.  Instead, EIAV-infected horses produce immune responses that control the infection \citep{CraigoMontelaro2013}. This control has been shown to be mediated by both CD$8^+$ cytotoxic T lymphocytes (CTLs) and antibody responses, which ultimately limit virus replication and prevent symptoms in long-term infected animals, even though the virus infection is not cleared \citep{Cooketal2013}. Specifically, T cell epitopes that indicate a broadening of CTL response have been identified to persist in long-term infection \citep{Issel2014, McGuireetal2000, Tagmyer2007}. Furthermore, evolving broadly neutralizing antibodies have been found to be needed to maintain asymptomatic disease \citep{Hammondetal1997, Sponseller2007, Craigo2007}. Thus, EIAV infection has been the subject of many controlled experiments to investigate the immune response to infection \citep{Mealeyetal2008, Tayloretal2010, Tayloretal2011, Schwartzetal2015, Schwartzetal2018}. The more we understand how CTLs and antibodies control EIAV infection, the more insight will be gained on how best to develop effective interventions that control other similar viral infections.

The standard model of virus infection \citep{NowakBangham1996, Perelsonetal1996} is a system of three ordinary differential equations (ODEs) that can account for many experimental observations in the stages of both HIV and equine infectious anemia virus (EIAV) infection \citep{Schwartzetal2018, PerelsonRibeiro2013, Stafford2000, Phillips1996, Noecker2015}. This model depicts the concentrations of uninfected cells, infected cells, and virus particles, and represents a virus infecting a target cell population, with the infected cells then producing more viral particles. This model does not include additional equations that explicitly  model immune responses. Since immune responses are dynamic as well, further realism can be gained by explicitly including equations representing immune responses.

\citet{NowakBangham1996} developed a model consisting of a system of four ODEs that includes the dynamics of one component of the immune system: the population of CTLs that kill infected cells.  A subsequent model by \citet{Wodarz2003} presents a system of five equations to model an infection by hepatitis C virus; this model explicitly includes immune responses given by populations of both CTLs and antibodies.

In 2013, \citet{Schwartzetal2013} published a five-equation model of EIAV infection that also includes two populations of immune responses, one for CTLs and one for antibodies.  This mathematical model differs from that of \citet{Wodarz2003} in the equation describing the antibody response. \citet{Schwartzetal2013} depict antibody production as proportional to the concentration of virus, rather than proportional to the interactions between viruses and pre-existing antibodies.
Other authors have built more complexity into this equation, such as by including B cell dynamics and differentiation into antibody producing cells \citep{LeMillerGanusov2015}, but this approach necessarily relies upon the addition of more parameters with unknown values in the case of EIAV infection.
The \citet{Schwartzetal2013} model takes a step back and uses a
more straightforward
approach, in which antibody production is modeled as first order in $V$. This choice still captures the essence of the biology, given that antibody production is correlated with the quantity of virus \citep{CraigoMontelaro2013,Sajadi2011,Koopman2015}.

In the notation of \citet{Schwartzetal2013}, the system of equations is

\begin{subequations}\label{eqn:EIAVmodel}
\begin{align}
\dot{M} &= \lambda - \rho M - \beta M V
\label{Mdot},\\
\dot{I}&=\beta M V - \delta I - k I C
\label{Idot},\\
\dot{V} &= b I - \gamma V - f V A
\label{Vdot},\\
\dot{C} &= \psi I C - \omega C
\label{Cdot},\\
\dot{A} &= \alpha V - \mu A
\label{Adot}.
\end{align}
\end{subequations}
The state variables are the concentrations of uninfected cells $M$ (in this infection, uninfected cells are macrophages, a type of white blood cell that is the target cell of EIAV), infected cells $I$, virus $V$, cytotoxic T lymphocytes $C$, and antibodies $A$.  Solutions of the EIAV model are only biologically relevant when all of the state variables are non-negative.  (We use ``biological'' as a synonym for ``non-negative''.)  The lower-case Greek and Latin letters denote 12 parameters, which are assumed to be positive.  These equations are interpreted as follows:  Eq. \eqref{Mdot} describes uninfected cells introduced at rate $\lambda$, removed at rate $\rho M$ and infected by virus particles at rate $\beta M V$;  Eq. \eqref{Idot} describes infected cells produced at rate $\beta M V$, removed at rate $\delta I$, and killed by CTLs at rate $k I C$;  Eq. \eqref{Vdot} describes virus particles (measured in viral RNA, vRNA) produced by infected cells at rate $b I$, removed at rate $\gamma V$, and neutralized by interaction with antibodies at rate $f V A$;  Eq. \eqref{Cdot} describes CTLs produced at rate $\psi I C$ and removed at rate $\omega C$, as in \citet{NowakBangham1996, Wodarz2003}; and finally, Eq. \eqref{Adot} describes antibody molecules produced at rate $\alpha V$ and removed at rate $\mu A$.

The model has five equilibria, yet only three of these can have non-negative values for all of the state variables and thus be biologically relevant.  These equilibrium states are (1) the infection-free equilibrium (IFE) $\mathbf{E}_0$, (2) the antibody-only equilibrium $\mathbf{E}_1$ describing an infection limited by an antibody response but not a CTL response, and (3) the coexistence equilibrium $\mathbf{E}_3$ describing an infection limited by both an antibody response and a CTL response.

\citet{Schwartzetal2013} showed that the existence of the biologically relevant equilibria could be determined by the basic reproduction number $R_0$ and a second threshold $R_1$. They identified example parameter sets corresponding to scenarios where each equilibrium is stable. However, they did not show that the thresholds determine the stability of the equilibria. In this paper we build upon the results of \citet{Schwartzetal2013} by investigating the bifurcation structure of system \eqref{eqn:EIAVmodel}, and by showing that there are at most two forward transcritical bifurcations. We use a combination of standard techniques to analyze the dynamics of \eqref{eqn:EIAVmodel} in general and verify that the thresholds are associated with bifurcation points.

We show that the infection free equilibrium $\mathbf{E}_0$ is globally asymptotically stable when $R_0<1$ and unstable when $R_0>1$. The antibody-only equilibrium $\mathbf{E}_1$ is locally asymptotically stable when $R_1<1<R_0$, indicating that there are bifurcations when $R_0=1$ and $R_1=1$. In particular, we use the next generation matrix method \citep{Diekmannetal1990,vandenDriesscheWatmough2002,vandenDriessche2017} to show that the $R_1=1$ bifurcation is a forward transcritical bifurcation between $\mathbf{E}_1$ and $\mathbf{E}_3$. Our analysis is supported by  bifurcation diagrams that can also be used to help identify parameter ranges associated with the stability of these equilibria.

The biological goal of our analysis is to identify the key parameters that correspond to the stability of each equilibrium. In particular, since EIAV infection is controlled by both antibodies and CTLs, we aim to determine which parameter ranges lead the system to the coexistence equilibrium $\mathbf{E}_3$. We can then use this information to determine the specific parameters to target with an intervention such as a vaccine.  For example, using the CTL production rate ($\psi$) or antibody production rate ($\alpha$) as control parameters, we can identify which ranges would be needed by a vaccine that stimulates CTL production or antibody production sufficiently to drive the system to $\mathbf{E}_3$. Such advances in our understanding of the modes of action of the immune system that control EIAV could indicate the targets for a potential vaccine to control other infections, like HIV.

\section{Analytical Results}
\label{sec:1}
\begin{prop}
If $(M,I,V,C,A)$ is a solution to the EIAV model, \eqref{eqn:EIAVmodel} with biological initial conditions, then $(M,I,V,C,A)$ remains biological for all time. Furthermore, $(M,I,V,C,A)$ is bounded for all time. \label{prop:biological}
\end{prop}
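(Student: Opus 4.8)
The plan is to establish the claim in three steps: (i) the solution stays biological on its interval of existence; (ii) it is bounded there; and (iii) consequently the interval of existence is all of $[0,\infty)$. Local existence and uniqueness of a maximal solution on some $[0,T)$, $T\in(0,\infty]$, is immediate since the right-hand side of \eqref{eqn:EIAVmodel} is polynomial, hence locally Lipschitz. For step (i) the key observation is that \eqref{eqn:EIAVmodel} is quasipositive: on each coordinate hyperplane $\{x_i=0\}$ intersected with the nonnegative orthant, the corresponding component of the vector field is nonnegative --- $\dot M=\lambda>0$ when $M=0$; $\dot I=\beta MV\ge0$ when $I=0$; $\dot V=bI\ge0$ when $V=0$; $\dot C=0$ when $C=0$; and $\dot A=\alpha V\ge0$ when $A=0$ --- and by the standard invariance criterion for such systems the nonnegative orthant is forward invariant. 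Equivalently, one can argue by hand: the integrating-factor representations of $M$ and $C$ display their positivity directly, after which a short bootstrap on the coupled pair $(I,V)$, using that $\{I=V=0\}$ is invariant, handles $I$, $V$, and $A$.

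For step (ii) I would exploit the structure of the infected-cell and CTL equations by considering $W=M+I+\frac{k}{\psi}C$. A direct computation shows that the $\beta MV$ terms cancel between $\dot M$ and $\dot I$ and that the $kIC$ terms cancel between $\dot I$ and $\frac{k}{\psi}\dot C$, leaving
\[
\dot W=\lambda-\rho M-\delta I-\frac{k\omega}{\psi}C\le\lambda-cW,\qquad c:=\min\{\rho,\delta,\omega\}>0,
\]
the inequality using the nonnegativity of $M$, $I$, $C$ from step (i). A comparison argument then gives $W(t)\le\max\{W(0),\lambda/c\}$, so $M$, $I$, and $C$ are each bounded; call $I_{\max}$ a bound for $I$. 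Substituting this into $\dot V=bI-\gamma V-fVA\le bI_{\max}-\gamma V$ bounds $V$ by $\max\{V(0),bI_{\max}/\gamma\}=:V_{\max}$, and substituting that into $\dot A=\alpha V-\mu A\le\alpha V_{\max}-\mu A$ bounds $A$. For step (iii), a solution confined to a fixed compact set cannot blow up in finite time, so the a priori bounds of step (ii) force $T=\infty$, whereupon (i) and (ii) hold for all $t\ge0$.

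I expect the only genuinely non-routine ingredient to be the weighted combination $W$ in step (ii). The CTL equation $\dot C=(\psi I-\omega)C$ carries no intrinsic upper bound, so control of $C$ has to be extracted from its coupling with $I$; the weight $k/\psi$ is exactly what cancels the potentially destabilizing $kIC$ term, and $\omega>0$ then supplies the decay. The positivity argument of step (i) is standard, its only mild subtlety being the mutual coupling of $I$ and $V$, which the quasipositivity criterion (or, concretely, the invariant set $\{I=V=0\}$) absorbs.
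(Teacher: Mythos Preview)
Your proposal is correct and follows essentially the same route as the paper: the paper also argues positivity componentwise (by hand rather than invoking quasipositivity), then uses exactly the weighted sum $\Sigma=M+I+\tfrac{k}{\psi}C$ with $\dot\Sigma\le\lambda-\kappa\Sigma$, $\kappa=\min\{\rho,\delta,\omega\}$, to bound $M,I,C$, and finally cascades to bounds on $V$ and $A$. Your step (iii) on global existence is a worthwhile addition that the paper leaves implicit.
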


\begin{proof}
First, suppose that $M(t)$ becomes negative, then there exists $t_0\geq 0$ such that $M(t_0)=0$. At $t=t_0$, $\dot{M} = \lambda>0$, and so there exists $\epsilon>0$ such that $M(t)>0$ for all $t\in(t_0,t_0+\epsilon)$. Thus $M(t)$ cannot become negative. Suppose now that $C(t)$ becomes negative, then there exists $t_0$ such that $C(t_0)=0$. At $t=t_0$ $\dot{C}=0$, and therefore $C(t)=0$ for all $t$ by the Picard-Lindel\"of theorem. Thus, if $C(0)>0$, then $C(t)>0$ for all $t$.

Suppose that one of $I$ or $V$ becomes negative. Then there exists $t^*\geq0$ such that $\min\{I(t^*),V(t^*)\} =0$. If $I(t^*) =0$ and $V(t^*)>0$, then $\dot{I}(t^*) = \beta M(t^*)V(t^*)>0$, and thus there exists $\epsilon>0$ such that $I(t)>0$ for all $t\in(t^*,t^*+\epsilon)$. If $I(t^*)>0$ and $V(t^*)=0$, then $\dot{V}(t^*) = bI(t^*)>0$, and therefore there exists $\epsilon>0$ such that $V(t)>0$ for all $t\in(t^*,t^*+\epsilon)$. Finally, if $I(t^*)=V(t^*)=0$, then $\dot{I}(t^*)=\dot{V}(t^*)=0$, and thus $I(t)=V(t)=0$ for all $t>0$. It follows that $I(t)\geq0$ and $V(t)\geq0$ for all $t\geq0$.

Finally, suppose that $A(t)$ becomes negative, then there exists $t_0\geq0$ such that $A(t_0)=0$. At $t=t_0$ $\dot{A}(t_0)=\alpha V(t_0)\geq0$.
It follows that $A(t)>0$ for all $t\geq0$.

To show that solutions are bounded, consider $\Sigma(t) := M(t) +I(t) +\tfrac{k}{\psi}C(t)$, which satisfies

\begin{align}
    \dot\Sigma(t) &= \lambda - \rho M(t) -\delta I(t)  -\frac{\omega k}{\psi}C(t),\notag\\
    &\leq \lambda - \kappa \Sigma(t),\label{eqn:prop1intermediate}
\end{align}
  where $\kappa= \min\{\rho,\delta,\omega\}$. By multiplying eq. \eqref{eqn:prop1intermediate} by $e^{\kappa t}$, rearranging, and integrating, we find

\begin{align*}
\Sigma(t) &\leq \frac{\lambda}{\kappa}+\Sigma(0)e^{-\kappa t}-\frac{\lambda}{\kappa}e^{-\kappa t},\\
&\leq \max\left\{\frac{\lambda}{\kappa},\Sigma(0)\right\}.
\end{align*}
Since $M(t)$, $I(t)$, and $C(t)$ are positive, it follows that they are bounded in forward time.
Since $I(t)$ is bounded above, there exists $T>0$ such that

\begin{align*}
\dot{V}(t) &\leq T-\gamma V(t) - f V(t) A(t), \\
        &\leq T-\gamma V(t),
\end{align*}
where the second inequality follows from the fact that $fV(t)A(t)\geq0$. It similarly follows that $V(t)$ is bounded for all $t\geq0$. Finally, since $I(t)$ is bounded above, there exists $Q>0$ such that

\begin{align*}
    \dot{A}(t) \leq Q-\mu A(t),
\end{align*}
and a similar argument shows that $A(t)$ is bounded for all $t\geq0$.
\end{proof}

\subsection{Equilibria of the EIAV Model}
\label{sec:2}

The equilibria of the EIAV model, \eqref{eqn:EIAVmodel}, are derived in \citet{Schwartzetal2013}. We report the non-negative ones here for convenience.

The infection free equilibrium (IFE) is given by
\begin{equation}
\mathbf{E}_0 = (M_0, I_0, V_0, C_0, A_0) = (\lambda/\rho,0,0,0,0).
\label{E0}
\end{equation}
Only uninfected cells are present.

The boundary equilibrium is given by $\mathbf{E}_1 = (M_1, I_1, V_1, C_1, A_1)$ where

\begin{subequations}
\begin{align}
M_1 &= \frac{\lambda}{(\rho + \beta V_1)},
\label{M1}\\
I_1 &= \frac{\lambda}{\delta} \cdot \frac{\beta V_1}{\rho + \beta V_1},
\label{I1}\\
V_1 &= \frac{-(\alpha f \rho + \beta \gamma \mu)+\sqrt{(\alpha f \rho - \beta \gamma \mu)^2 + \frac{4 \alpha b \beta^2 f \mu\lambda}{\delta}}}{2 \alpha \beta f},
\label{V1}\\
C_1 &= 0,\label{C1}\\
A_1 &= \frac{\alpha}{\mu} V_1.\label{A1}
\end{align}
\end{subequations}
The boundary equilibrium describes an infection---with both virus particles and infected cells present---that elicits an antibody response but not a CTL response.

Finally, the endemic equilibrium is given by $\mathbf{E}_3=(M_3,I_3,V_3,C_3,A_3)$ where

\begin{subequations}
\begin{align}
M_3 &= \frac{\lambda}{\rho + \beta V_3},
\label{M3}\\
I_3 &= \omega/\psi,
\label{I3}\\
V_3&=\frac{-\gamma \mu \psi + \sqrt{(\gamma \mu \psi)^2 + 4 \alpha b f \mu \psi \omega}}{2 \alpha f \psi},
\label{V3}\\
C_3 &= \frac{\lambda \psi}{k \omega} \cdot \frac{\beta V_3}{\rho + \beta V_3} - \frac{\delta}{k},
\label{C3}\\
A_3 &= \frac{\alpha}{\mu}V_3.
\label{A3}
\end{align}
\end{subequations}
The endemic equilibrium describes an infection that elicits both antibody and CTL responses from the immune system.  Note that this equilibrium is only non-negative when $C_3 \geq 0$.

\subsection{Stability Analysis}
\label{sec:3}

\citet{Schwartzetal2013} made use of the next generation matrix method \citep{vandenDriesscheWatmough2002} to analyze the linear stability of the IFE.  We combine their results in the statement of Theorem \ref{Schwartz} below. If an infected cell is introduced into an IFE, the basic reproduction number $R_0$ is roughly the average number of infected cells produced.  The number is a threshold value. {\it I.e.},  if $R_0<1$, then the infection will die out and if $R_0>1$, then the infection will grow.

 \begin{theorem} Consider the EIAV model with positive parameter values.  The basic reproduction number is
 \begin{equation}
 R_0 = \frac{b \beta \lambda}{\delta \gamma \rho}.
 \label{R_0}
 \end{equation}
  Further, the disease-free equilibrium $\mathbf{E}_0$ is globally asymptotically stable for $R_0<1$ and unstable for $R_0>1$.
 \label{Schwartz}
 \end{theorem}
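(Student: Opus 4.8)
The plan is to obtain the formula for $R_0$ from the next generation matrix, then handle local and global stability separately.

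\emph{The reproduction number and local stability.} I would take the infected compartments to be $I$ and $V$ (the immune effectors $C$ and $A$ vanish at $\mathbf E_0$ but have no linear coupling to $I,V$ there, so they play no role in the construction) and split the $(\dot I,\dot V)$ subsystem as $\mathcal F-\mathcal V$, with $\mathcal F=(\beta M V,\,0)^{\top}$ collecting the new-infection terms and $\mathcal V=(\delta I+kIC,\;-bI+\gamma V+fVA)^{\top}$ the transitions. Linearizing at $\mathbf E_0=(\lambda/\rho,0,0,0,0)$ gives a nilpotent matrix $F$ whose only nonzero entry is $\beta\lambda/\rho$ (in the $I$--$V$ slot) and a transition matrix $V$ with rows $(\delta,0)$ and $(-b,\gamma)$; since $V^{-1}\ge 0$, a one-line computation gives spectral radius $b\beta\lambda/(\delta\gamma\rho)$, which is the stated $R_0$. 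I would note that the hypotheses (A1)--(A5) of \citet{vandenDriesscheWatmough2002} hold here --- $\mathcal F,\mathcal V^{\pm}$ are nonnegative on the biological region, $V$ is a nonsingular M-matrix, and the disease-free subsystem $\dot M=\lambda-\rho M$, $\dot C=-\omega C$, $\dot A=-\mu A$ is linearly stable --- so Theorem~2 of that reference gives local asymptotic stability of $\mathbf E_0$ for $R_0<1$ and instability for $R_0>1$. Equivalently and more self-containedly, the Jacobian at $\mathbf E_0$ is block triangular with eigenvalues $-\rho,-\omega,-\mu$ together with a $2\times2$ block on $(I,V)$ of trace $-(\delta+\gamma)<0$ and determinant $\delta\gamma(1-R_0)$, so Routh--Hurwitz holds precisely when $R_0<1$, and for $R_0>1$ that block has a positive eigenvalue.

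\emph{Global stability when $R_0<1$.} By Proposition~\ref{prop:biological} every biological solution stays nonnegative, with $M(t)>0$, and is bounded; write $M_0=\lambda/\rho$. I would use the Lyapunov function
\[
L=\Bigl(M-M_0-M_0\ln\tfrac{M}{M_0}\Bigr)+I+\tfrac{\delta}{b}\,V ,
\]
which is nonnegative on the biological region and vanishes exactly at $(M,I,V)=(M_0,0,0)$. Using $\lambda=\rho M_0$, a direct computation --- in which the $\delta I$ contributions cancel and the sign-indefinite cross-term $-\beta(M-M_0)V$ combines with the $\beta MV$ coming from $\dot I$ to leave only $\beta M_0 V=(\delta\gamma/b)R_0\,V$ --- yields
\[
\dot L=-\rho\,\frac{(M-M_0)^2}{M}+\frac{\delta\gamma}{b}\,(R_0-1)\,V-kIC-\frac{\delta f}{b}\,VA .
\]
When $R_0<1$ every term on the right is $\le 0$, so $\dot L\le 0$ throughout the biological region. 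Since trajectories are bounded, LaSalle's invariance principle says each biological solution converges to the largest invariant set inside $\{\dot L=0\}$; on that set $M=M_0$ and $V=0$, invariance forces $\dot V=bI=0$ hence $I=0$, and then $\dot C=-\omega C$ and $\dot A=-\mu A$ force $C=A=0$, so the set is $\{\mathbf E_0\}$. Thus $\mathbf E_0$ is globally attracting on the biological region, and with the Lyapunov stability from the local analysis it is globally asymptotically stable; together with the instability statement this proves the theorem.

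\emph{Main obstacle.} I expect the crux to be guessing the Lyapunov function --- specifically the logarithmic form of the $M$-term and the weight $\delta/b$ on $V$, chosen precisely so that the $\delta I$ terms cancel and the $MV$ terms collapse to the single benign term $\beta M_0 V$; once $\dot L$ is in the displayed form the sign analysis and the LaSalle step are routine. A secondary point is ensuring the boundedness needed for LaSalle and the positivity of $M$ needed for $L$ to be defined are in hand, but these are exactly what Proposition~\ref{prop:biological} supplies. (If one prefers to avoid LaSalle, the same conclusion follows from the a priori bound $\limsup_{t\to\infty}M(t)\le M_0$ together with a Barbalat-type argument to push $V\to0$, then $I\to0$, then $C,A\to0$ and $M\to M_0$ in turn, at the cost of dealing directly with the non-autonomous coupling between $M$ and the infected subsystem.)
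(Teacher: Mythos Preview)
Your proposal is correct and follows essentially the same strategy as the paper: the next generation matrix gives $R_0$, and a Volterra--type Lyapunov function with the $M-M_0\ln(M/M_0)$ term plus LaSalle's principle gives global stability for $R_0<1$.

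The one noteworthy difference is in the Lyapunov function itself. The paper's function also carries $C$ and $A$ terms, namely $\tfrac{k}{\psi}C+\epsilon\tfrac{\gamma\delta}{b\alpha}A$ with $\epsilon>0$ chosen so that $R_0+\epsilon\le 1$; this makes $\dot U_\epsilon=0$ force $V=C=A=0$ and $M=M_0$ directly, so the LaSalle set is already $\{(M_0,I,0,0,0)\}$ and one invariance step gives $I=0$. Your simpler three--term $L$ works too, but your LaSalle set is $\{(M_0,I,0,C,A)\}$, and the sentence ``$\dot C=-\omega C$ and $\dot A=-\mu A$ force $C=A=0$'' skips a beat: those equations do not force $C=A=0$ on the whole set, only on the $\omega$--limit set, via compactness and backward invariance (if $C>0$ somewhere on the $\omega$--limit set then $C\to\infty$ as $t\to-\infty$). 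This is routine, but you should state it; alternatively, adopting the paper's extra terms removes the issue at the cost of the $\epsilon$ bookkeeping.
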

 \begin{proof}
 Linear stability results for $\mathbf{E}_0$ and a calculation of $R_0$ using the next generation matrix method were done in \citet{Schwartzetal2013}.

 Suppose that $R_0<1$, then there exists $\epsilon>0$ such that $R_0+\epsilon \leq 1.$ To show global stability of $\mathbf{E}_0$, define the Lyapunov function

 \begin{equation*}
    U_{\epsilon} = M(t)-M_0\log\left(\frac{M(t)}{M_0}\right)+I(t)+\frac{\delta}{\beta}V(t)+\frac{k}{\psi}C(t)+\epsilon\frac{\gamma\delta}{b\alpha}A(t),
 \end{equation*}
 so that, for any $\epsilon>0$, $U_\epsilon\geq0$ with $U_\epsilon = 0$ only when $(M,I,V,C,A) = \mathbf{E}_0$.
 The derivative of $U_\epsilon$ along trajectories is

 \begin{align*}
     \dot{U}_{\epsilon} &= -\frac{\rho}{M(t)}(M_0-M(t))^2 -\frac{\gamma\delta}{b}(1-R_0-\epsilon)V(t)\\&\quad-\gamma\delta fV(t)A(t)-\frac{k\omega}{\psi}C(t)-\epsilon\frac{\gamma\delta\mu}{b\alpha}A(t)
     \leq 0.
 \end{align*}
 By LaSalle's invariance principle, solutions converge to the largest compact invariant set of \eqref{eqn:EIAVmodel} that is contained in $ \{(M(t),I(t),V(t),C(t),A(t)): \dot{U}_\epsilon = 0\}= \{(M_0,I(t),0,0,0)\}$. {\it I.e.,} $\mathbf{E}_0.$
 \end{proof}

Note that $V_1$ may be written in terms of $R_0$ as
\begin{equation}\label{eq:V1}
V_1 = \frac{-(\alpha f \rho + \beta \gamma \mu)+\sqrt{(\alpha f \rho + \beta \gamma \mu)^2 + 4 \alpha f \rho \beta \gamma \mu(R_0-1)}}{2 \alpha \beta f},
\end{equation}
so that when $R_0=1$, $V_1 = 0$. It is easy to see from this that $\mathbf{E}_0=\mathbf{E}_1$ when $R_0=1$.

The antibody-only equilibrium $\mathbf{E}_1$ represents a viral infection with an antibody response, $A_1>0$, but no CTL response, $C_1=0$.  We may treat the CTL response as an active variable, similar to the use of ``infectious variable'' in the terminology of \citet{vandenDriesscheWatmough2002}. From this new point of view the boundary equilibrium $\mathbf{E}_1$ is analogous to an infection free equilibrium, in that it is free of the active variable $C$.  Thus, we can obtain an expression for the threshold for CTLs using the next generation matrix method. The term that introduces new CTLs into the system is $\mathcal{F}_1 = \psi I C$ and the term that eliminates them is $\mathcal{V}_1=\omega C$. Let $\mathcal{F} = \left(\frac{\partial \mathcal{F}_1}{\partial C}\right)(\mathbf{E}_1) = \psi I_1$ and $\mathcal{V} = \left(\frac{\partial \mathcal{V}_1}{\partial C}\right)(\mathbf{E}_1)=\omega$. Then, by the next generation matrix method,
\begin{equation}
\hat{R}_1 = \mathcal{F}\mathcal{V}^{-1} = \frac{\lambda \psi}{\delta \omega} \cdot \frac{\beta V_1}{\rho + \beta V_1}.
\label{R1hat}
\end{equation}
Alternatively, following \citet{Schwartzetal2013}, we write the $\mathbf{E}_3$ CTL response (given by eq. \ref{C3}) in the form $C_3=(\delta/k)(R_1 - 1)$ where
\begin{equation}
R_1 = \frac{\lambda \psi}{\delta \omega} \cdot \frac{\beta V_3}{\rho + \beta V_3}
\label{R1}
\end{equation}
Those authors show that $R_1<R_0$ (\citet{Schwartzetal2013}, Theorem 3). $\mathbf{E}_3$  is biological only when $R_1 \geq 1$.

When $\mathbf{E}_1=\mathbf{E}_3$ we have $V_1=V_3$, which implies $R_1 = \hat{R}_1$, and $C_3=C_1=0$, which implies $R_1=1$. On the other hand, if $R_1=\hat{R}_1$, it follows that $V_1=V_3$ and, by virtue of the equations that $V_1$ and $V_3$ solve, {\it i.e.,}

\begin{align}
    \frac{f\alpha}{\mu}V_1^2+\gamma V_1-b I_1&=0,\label{V1Quad}\\
    \frac{f\alpha}{\mu}V_3^2+\gamma V_3-bI_3&=0\label{V3Quad},
\end{align}
that $I_1=I_3$. Since $I_1=\frac{\psi}{\omega}\hat{R}_1=I_3\hat{R}_1$ we obtain $\hat{R}_1=1,$ and therefore that $\mathbf{E}_1=\mathbf{E}_3$. Finally, if $\hat{R}_1=1$, then $I_1=I_3$ and by eqs. \eqref{V1Quad} and \eqref{V3Quad},  $V_1=V_3$. Therefore $R_1=\hat{R}_1$. Thus, the following three statements are equivalent:
\begin{enumerate}
    \item $R_1=\hat{R}_1$,
    \item $\mathbf{E}_1=\mathbf{E}_3$,
    \item $\hat{R}_1=1$.
\end{enumerate}
Therefore, both $R_1$ and $\hat{R}_1$ may be used as a threshold to determine the existence (and stability) of $\mathbf{E}_3$, although only $\hat{R}_1$ should be thought of as a basic reproductive number.

\begin{lemma} If $V_1$ depends on a parameter, that dependence is strictly monotone. \label{lem:transverse}
\end{lemma}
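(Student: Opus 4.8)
The plan is to sidestep differentiating the explicit radical in \eqref{V1} directly — most parameters appear there in two or three places — and instead exploit the implicit equation that $V_1$ satisfies. Working in the regime $R_0>1$ where $V_1>0$ (recall from the discussion after \eqref{eq:V1} that $V_1=0$ when $R_0=1$, and there is nothing to prove otherwise), I would divide \eqref{V1Quad} by $V_1$ and substitute $I_1$ from \eqref{I1} to see that $V_1$ is the unique positive root of
\begin{equation*}
h(V) := \frac{f\alpha}{\mu}V + \gamma - \frac{b\lambda\beta}{\delta(\rho+\beta V)} = 0 .
\end{equation*}

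The first key step is to record that $h$ is strictly increasing in $V$ on $[0,\infty)$: both $\tfrac{f\alpha}{\mu}V$ and $-\tfrac{b\lambda\beta}{\delta(\rho+\beta V)}$ are increasing in $V$, so $\partial h/\partial V = \tfrac{f\alpha}{\mu} + \tfrac{b\lambda\beta^2}{\delta(\rho+\beta V)^2} > 0$. (Together with $h(0)=\gamma(1-R_0)<0$ and $h(V)\to\infty$ this re-derives existence and uniqueness of $V_1$, but the point that matters here is the strict positivity of $\partial h/\partial V$.) The second step is the implicit function theorem: for any parameter $p$ appearing in $h$,
\begin{equation*}
\frac{\partial V_1}{\partial p} = -\left(\frac{\partial h}{\partial V}\right)^{-1}\frac{\partial h}{\partial p},
\end{equation*}
so $\partial V_1/\partial p$ has the sign opposite to that of $\partial h/\partial p$ evaluated at $V=V_1>0$.

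The final step is the routine verification that, for each of the nine parameters $\lambda,\rho,\beta,\gamma,\delta,b,f,\alpha,\mu$ occurring in $h$, the partial derivative $\partial h/\partial p$ has a sign that does not depend on $V_1$ or on the other parameters: for instance $\partial h/\partial f = \tfrac{\alpha}{\mu}V>0$, $\partial h/\partial\alpha = \tfrac{f}{\mu}V>0$, $\partial h/\partial\gamma = 1>0$, $\partial h/\partial\mu = -\tfrac{f\alpha}{\mu^2}V<0$, $\partial h/\partial b = -\tfrac{\lambda\beta}{\delta(\rho+\beta V)}<0$, $\partial h/\partial\lambda = -\tfrac{b\beta}{\delta(\rho+\beta V)}<0$, $\partial h/\partial\delta = \tfrac{b\lambda\beta}{\delta^2(\rho+\beta V)}>0$, $\partial h/\partial\rho = \tfrac{b\lambda\beta}{\delta(\rho+\beta V)^2}>0$, and $\partial h/\partial\beta = -\tfrac{b\lambda\rho}{\delta(\rho+\beta V)^2}<0$. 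Hence $\partial V_1/\partial p$ is single-signed in each case, i.e.\ $V_1$ is strictly monotone in $p$. The remaining three parameters $k,\psi,\omega$ do not appear in \eqref{V1} at all, so this exhausts the cases.

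I do not expect a genuine obstacle here once the problem is reframed through $h$; the only care needed is that the reduction and the signs both rely on $V_1>0$, which is precisely the biologically relevant regime $R_0>1$. The parameter $\beta$ is the only mildly delicate case, since it occurs twice in $h$, but the quotient-rule computation still collapses to the single-signed expression $-\tfrac{b\lambda\rho}{\delta(\rho+\beta V)^2}$.
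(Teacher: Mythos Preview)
Your proof is correct and takes a genuinely different route from the paper. The paper groups the nine parameters into four combinations $x=\mu/(2\alpha f)$, $y=\rho/(2\beta)$, $z=\lambda b/(2\delta)$, $\gamma$, rewrites $V_1$ explicitly in terms of these, and then differentiates the radical directly; the chain rule then pushes the result back to the original parameters. You instead recast the defining quadratic as $h(V)=0$ and apply the implicit function theorem, reducing the question to the sign of $\partial h/\partial p$.

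Your approach avoids all manipulation of the square root and keeps each parameter separate, which makes the nine sign checks genuinely one-line computations; the paper's grouping is elegant but the resulting derivatives (especially $\partial V_1/\partial x$) are messier and require a secondary argument that the sign of $V_1$ is independent of $x$. One small point: your restriction to $R_0>1$ is not actually needed for the implicit-function step, since $h$ is defined and strictly increasing on all of $(-\rho/\beta,\infty)$, and the only parameters for which the sign of $\partial h/\partial p$ depends on the sign of $V$ are $f,\alpha,\mu$, which do not enter $R_0$ and hence cannot flip the sign of $V_1$ as they vary. This is exactly the observation the paper makes for its grouped variable $x$, so the two arguments close the same way.
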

\begin{proof}
By letting $x = \mu/(2\alpha f)$, $y = \rho/(2\beta)$, and $z= \lambda b/(2\delta)$, we can write
\begin{equation*}
    V_1 = -\gamma x-y +\sqrt{(\gamma x+y)^2+4x(z-y\gamma)}.
\end{equation*}
Straightforward calculations show that, for positive parameter values,

\begin{align*}
    \frac{\partial V_1}{\partial y} &=- \frac{x\gamma -y +\sqrt{(x\gamma-y)^2+4xz}}{\sqrt{(x\gamma-y)^2+4xz}}<0,\\  \frac{\partial V_1}{\partial z} &=\frac{2x}{\sqrt{(x\gamma-y)^2+4xz}}>0,\\
    \frac{\partial V_1}{\partial \gamma} &= \frac{x(x\gamma-y-\sqrt{(x\gamma-y)^2+4xz})}{\sqrt{(x\gamma-y)^2+4xz}}<0,\\
    \frac{\partial V_1}{\partial x}&=\frac{2z\gamma V_1}{\gamma(x\gamma-y+\sqrt{(x\gamma -y)^2+4xz})\sqrt{(x\gamma-y)^2+4xz}}.
\end{align*}
The sign of $\frac{\partial V_1}{\partial x}$ is determined by the sign of $V_1$, which is determined by the sign of $R_0 = z/(y\gamma)$, which is independent of $x$. Applying the chain rule gives us the desired result.
\end{proof}
Since the dependence of $\hat{R}_1$ on $V_1$ is strictly monotone, a corollary to Lemma \ref{lem:transverse} is that if $\hat{R}_1$ depends on a parameter, then that dependence is also strictly monotone.

\begin{theorem}
Consider \eqref{eqn:EIAVmodel} with positive parameter values.  The curves of equilibria $\mathbf{E}_0$ and $\mathbf{E}_1$ intersect in a forward transcritical bifurcation when $R_0=1$. The infection free equilibrium $\mathbf{E}_0$ is globally asymptotically stable for $R_0<1$ and unstable for $R_0>1$.  The antibody-only equilibrium $\mathbf{E}_1$ is non-biological for $R_0<1$, is locally asymptotically stable for $\hat{R}_1<1<R_0$, and is unstable if $\hat{R}_1>1$.
\label{E0E1}
\end{theorem}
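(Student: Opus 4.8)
The plan is to treat the three assertions separately. The $\mathbf{E}_0$ dichotomy is exactly Theorem \ref{Schwartz}, so nothing new is needed there. For the behaviour of $\mathbf{E}_1$ I would use a sign analysis of $V_1$ for the (non\nobreakdash-)biologicality, a linearization at $\mathbf{E}_1$ together with Routh--Hurwitz for the local stability, and Lemma \ref{lem:transverse} to pin down the transversality of the crossing at $R_0=1$.

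First the elementary observations. From \eqref{eq:V1}, when $R_0<1$ the term $4\alpha f\rho\beta\gamma\mu(R_0-1)$ is negative, so the square root is strictly smaller than $\alpha f\rho+\beta\gamma\mu$ and hence $V_1<0$; since $A_1=(\alpha/\mu)V_1$ and $I_1$ share the sign of $V_1$, the equilibrium $\mathbf{E}_1$ is non-biological for $R_0<1$. (The radicand is non-negative for all positive parameters since it equals $(\alpha f\rho-\beta\gamma\mu)^2+4\alpha b\beta^2f\mu\lambda/\delta$, so $V_1$ is always real.) At $R_0=1$ we get $V_1=0$, which gives $\mathbf{E}_1=\mathbf{E}_0$, and $V_1>0$ for $R_0>1$. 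For the bifurcation I would linearize at $\mathbf{E}_0$: the Jacobian is block triangular with eigenvalues $-\rho,-\omega,-\mu$ together with those of the $2\times2$ block in the $(I,V)$ variables, whose trace is $-(\delta+\gamma)<0$ and whose determinant is $\delta\gamma(1-R_0)$. Thus at $R_0=1$ there is a simple zero eigenvalue and the rest of the spectrum lies strictly in the open left half-plane. Taking as bifurcation parameter any quantity on which $V_1$ depends monotonically (e.g. $\beta$, which enters through $y=\rho/(2\beta)$ in Lemma \ref{lem:transverse}), Lemma \ref{lem:transverse} gives that the $\mathbf{E}_1$ branch meets the $\mathbf{E}_0$ branch transversally at $R_0=1$. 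Since $\mathbf{E}_1$ is biological precisely on the side $R_0>1$ where $\mathbf{E}_0$ is unstable, and since $I_1\to0$ as $R_0\to1^+$ so that $\hat R_1=\psi I_1/\omega<1$ near the bifurcation, the new branch is stable there; hence the crossing is a forward transcritical exchange of stability, which can be phrased via the standard transcritical/center-manifold theorem or read off directly from the explicit branches together with the stability analysis below.

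For the local stability of $\mathbf{E}_1$ I would compute $J(\mathbf{E}_1)$ and observe that, because $\partial_I\dot C=\psi C_1=0$ at $C_1=0$, the characteristic polynomial factors as $\bigl(\sigma-(\psi I_1-\omega)\bigr)\det(\sigma I-J')$, where $J'$ is the $4\times4$ block in the $(M,I,V,A)$ variables, i.e. the Jacobian of the CTL-free subsystem at its interior equilibrium. The decoupled eigenvalue is $\psi I_1-\omega=\omega(\hat R_1-1)$, positive when $\hat R_1>1$ (giving instability) and negative when $\hat R_1<1$. For the block $J'$, I would expand $\det(\sigma I-J')$ and simplify using the equilibrium identities $\beta M_1 b=\delta(\gamma+fA_1)$, $\alpha fV_1=\mu fA_1$, and $\beta V_1=(\rho+\beta V_1)-\rho$; these collapse the quartic $\sigma^4+a_3\sigma^3+a_2\sigma^2+a_1\sigma+a_0$ to one with coefficients that are manifestly positive whenever $V_1>0$ (the substitution $a=\rho+\beta V_1$ makes the single cancellation in $a_1$ and $a_0$ transparent). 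A Routh--Hurwitz check then gives all four roots with negative real part for $R_0>1$. Combining the two factors: for $\hat R_1<1<R_0$ every eigenvalue has negative real part, so $\mathbf{E}_1$ is locally asymptotically stable, while for $\hat R_1>1$ the eigenvalue $\omega(\hat R_1-1)>0$ makes $\mathbf{E}_1$ unstable.

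The main obstacle is the Routh--Hurwitz step for the quartic: positivity of each $a_i$ is immediate once the equilibrium relations are substituted, but the surviving nontrivial inequality $a_3a_2a_1>a_1^2+a_3^2a_0$ must still be verified, and keeping that algebra from blowing up combinatorially requires disciplined use of $\beta M_1 b=\delta(\gamma+fA_1)$ and $\alpha fV_1=\mu fA_1$ rather than carrying all twelve parameters. If the quartic proves unwieldy, an alternative for the stability half of the claim is to establish local (indeed global) asymptotic stability of the CTL-free equilibrium inside the invariant face $\{C=0\}$ by a Volterra/Goh-type Lyapunov function with logarithmic terms, in the spirit of the function $U_\epsilon$ used in the proof of Theorem \ref{Schwartz}; this trades the determinant computation for the construction of the correct weights.
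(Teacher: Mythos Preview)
Your proposal is correct and follows essentially the same route as the paper: invoke Theorem~\ref{Schwartz} for $\mathbf{E}_0$, read off the sign of $V_1$ from \eqref{eq:V1} for (non-)biologicality, use Lemma~\ref{lem:transverse} for transversality at $R_0=1$, and at $\mathbf{E}_1$ split off the eigenvalue $\omega(\hat R_1-1)$ from the $4\times4$ block in $(M,I,V,A)$ and apply Routh--Hurwitz to the latter. Your concern about the quartic inequality $a_3a_2a_1>a_1^2+a_3^2a_0$ is well placed---the paper does not do this in the text either but defers it to a supplementary computer-algebra notebook (reporting $172$ terms, two negative, absorbed by completing squares), so your suggested equilibrium substitutions $\beta M_1 b=\delta(\gamma+fA_1)$ and $\alpha fV_1=\mu fA_1$ are exactly the right discipline to impose before that check.
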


\begin{proof}
The stability results properties of $\mathbf{E}_0$ follow directly from Theorem \ref{Schwartz}. If $R_0<1$, then it follows from \eqref{eq:V1} that $V_1<0$, and thus $\mathbf{E}_1$ is non-biological. By Lemma \ref{lem:transverse}, the intersection between the curves of equilibria $\mathbf{E}_0$ and $\mathbf{E}_1$ is transverse. When $R_0>1$, the antibody-only equilibrium $\mathbf{E}_1$ is biological.

By writing the equation for CTLs first, the Jacobian at $\mathbf{E}_1$ may be written as
 \begin{equation}
 D\mathbf{g}(\mathbf{E}_1)=\left(
\begin{array}{ccccc}
\omega \hat{R}_1-\omega & 0                 & 0       & 0               & 0 \\
0               & -\rho - \beta V_1 & 0       & -\beta M_1      & 0 \\
-\frac{k\omega}{\psi} \hat{R}_1          & \beta V_1         & -\delta & \beta M_1       & 0 \\
0               & 0                 & b       & -\gamma - f A_1 & -f V_1 \\
0               & 0                 & 0       & \alpha          & -\mu
\end{array} \right),
\label{Dg}
\end{equation}
which has a lower block-triangular form. The eigenvalues of $D\mathbf{g}(\mathbf{E}_1)$ are $\omega(\hat{R}_1-1)$ and the eigenvalues of the lower $4\times4$ matrix. Thus if $\hat{R}_1>1$, then $D\mathbf{g}(\mathbf{E}_1)$ has a positive eigenvalue and $\mathbf{E}_1$ is unstable.
\\When $R_0>1$ the characteristic equation of the lower $4\times4$ matrix satisfy the Routh-Hurwitz criteria (see for example \citet{Meinsma1995}), and therefore have negative real part (for details see the supplementary material). When $R_0 = 1$ there is a transcritical bifurcation that is ``forward'' in the sense that $\mathbf{E}_1$ is locally asymptotically stable for $R_0$ in a neighbourhood of the form $(1,1+\epsilon)$, for $\epsilon$ sufficiently small.
\end{proof}

\begin{theorem}
Consider the EIAV model with positive parameter values and $\lambda \psi > \delta \omega$. The curves of the equilibria $\mathbf{E}_1$ and $\mathbf{E}_3$ intersect in a forward transcritical bifurcation when $\hat{R}_1=1$. The coexistence equilibrium $\mathbf{E}_3$ is non-biological if $\hat{R}_1<1$ and is locally asymptotically stable for $\hat{R}_1$ in a neighborhood of form $(1,1+\epsilon)$ for sufficiently small $\epsilon>0$.
\label{E1E3}
\end{theorem}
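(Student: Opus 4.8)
The plan is to follow the template of the proof of Theorem~\ref{E0E1}, now regarding $C$ as the active variable and $\mathbf{E}_1$ as the ``$C$-free'' equilibrium, so that $\hat{R}_1$ plays the role of a basic reproduction number for the CTL compartment. I would fix every parameter but one --- conveniently the CTL production rate $\psi$, since $\hat{R}_1 = (\psi/\omega)\,I_1$ is linear in $\psi$ while $R_0$ and all of $\mathbf{E}_1$ (in particular $V_1$ and $I_1$) are independent of $\psi$. By the corollary to Lemma~\ref{lem:transverse} the crossing $\hat{R}_1 = 1$ is transverse and occurs at a unique value $\psi^{*}$; the hypothesis $\lambda\psi > \delta\omega$ is precisely what makes the crossing occur, since $\hat{R}_1 < \lambda\psi/(\delta\omega)$ identically. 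At $\psi^{*}$ one also has $R_1 = \hat{R}_1 = 1$ (by the equivalences preceding Lemma~\ref{lem:transverse}), hence $R_0 > R_1 = 1$ by Theorem~3 of \citet{Schwartzetal2013}.

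The first substantive step is to show $\operatorname{sign}(C_3) = \operatorname{sign}(\hat{R}_1 - 1)$, which locates the region where $\mathbf{E}_3$ is biological. Set $h(V) = \tfrac{f\alpha}{\mu}V^{2} + \gamma V$; then \eqref{V1Quad}--\eqref{V3Quad} read $h(V_1) = b I_1$ and $h(V_3) = b I_3$, and since $h$ is strictly increasing on $[0,\infty)$ this gives $\operatorname{sign}(V_1 - V_3) = \operatorname{sign}(I_1 - I_3) = \operatorname{sign}(\hat{R}_1 - 1)$ (using $\hat{R}_1 = (\psi/\omega)I_1$ and $I_3 = \omega/\psi$). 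On the other hand $C_3 = (\delta/k)(R_1 - 1)$ with $R_1 = (\psi/\omega)\cdot\tfrac{\lambda}{\delta}\cdot\tfrac{\beta V_3}{\rho + \beta V_3}$, so $\operatorname{sign}(C_3) = -\operatorname{sign}\!\big(H(V_3)\big)$ where $H(V) := h(V) - \tfrac{b\lambda\beta V}{\delta(\rho + \beta V)}$. Combining \eqref{V1Quad} and \eqref{I1} gives $H(V_1) = 0$, and $V \mapsto H(V)/V$ is strictly increasing with value $\gamma(1 - R_0)$ at $V = 0$; hence when $R_0 > 1$ the root $V_1$ is simple, $H$ is negative on $(0,V_1)$ and positive on $(V_1,\infty)$, so $\operatorname{sign}\!\big(H(V_3)\big) = \operatorname{sign}(V_3 - V_1)$. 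Chaining these equalities yields $\operatorname{sign}(C_3) = \operatorname{sign}(\hat{R}_1 - 1)$. (When $R_0 \le 1$ one has $H > 0$ on $(0,\infty)$, forcing $C_3 < 0$, while also $\hat{R}_1 < 1$, so the sign statement persists globally.) In particular $\mathbf{E}_3$ is non-biological for $\hat{R}_1 < 1$, biological for $\hat{R}_1 > 1$, and $\mathbf{E}_3 = \mathbf{E}_1$ at $\hat{R}_1 = 1$.

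The second step is the spectral picture at the bifurcation. There $\mathbf{E}_3 = \mathbf{E}_1$, so the Jacobian has the block-triangular form \eqref{Dg} with top-left entry $\omega(\hat{R}_1 - 1) = 0$; the lower $4\times 4$ block satisfies the Routh--Hurwitz conditions (since $R_0 > 1$), exactly as in Theorem~\ref{E0E1}, so $0$ is a simple eigenvalue and the remaining four eigenvalues lie in the open left half-plane. Hence there is a one-dimensional center manifold along which $\mathbf{E}_1$ ($C = 0$) and $\mathbf{E}_3$ ($C = C_3$) exchange stability transcritically. I would finish either by computing the left and right null vectors of \eqref{Dg} --- made explicit by the block structure --- and invoking the center-manifold sign criterion in the framework of \citet{vandenDriesscheWatmough2002}, or, more directly, by exchange of stability: the critical eigenvalue along the branch $\mathbf{E}_1$ equals $\omega(\hat{R}_1 - 1)$, which is positive for $\hat{R}_1 > 1$, so the critical eigenvalue along the branch $\mathbf{E}_3$ is negative there, while the other four eigenvalues of $D\mathbf{g}(\mathbf{E}_3)$ remain in the left half-plane by continuity. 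Since the biological branch emerges on the $\hat{R}_1 > 1$ side (Step~1), the bifurcation is forward, and $\mathbf{E}_3$ is locally asymptotically stable for $\hat{R}_1 \in (1, 1 + \epsilon)$.

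I expect the main obstacle to be the sign identity $\operatorname{sign}(C_3) = \operatorname{sign}(\hat{R}_1 - 1)$: obtaining it cleanly means playing the two virus quadratics off against the expressions for $I_1$ and $M_3$, and handling the $R_0 \le 1$ regime separately. Everything afterward is standard local bifurcation theory, modulo a secondary check that the transcritical bifurcation is nondegenerate (nonvanishing normal-form coefficients); this reduces to the transversality already furnished by the corollary to Lemma~\ref{lem:transverse} together with $dC_3/d\psi \ne 0$ at $\psi^{*}$, which follows from $V_3$ being strictly monotone in $\psi$.
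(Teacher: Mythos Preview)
Your proposal is correct and, in option~(a), lands on exactly the paper's argument: the paper computes the right and left null vectors $w,v$ of $D\mathbf{g}(\mathbf{E}_1)$ at $\hat{R}_1=1$, checks transversality $v\cdot D_{xp}\mathbf{g}(\mathbf{E}_1)\cdot w=\omega\,\partial\hat{R}_1/\partial p\neq 0$ via Lemma~\ref{lem:transverse}, and checks nondegeneracy $v\cdot D_{xx}\mathbf{g}(\mathbf{E}_1)\cdot w^{2}=\psi\,w_3<0$ (this is where $\lambda\psi>\delta\omega$ enters, to fix the sign of $w_3$). Your Step~1, the global sign identity $\operatorname{sign}(C_3)=\operatorname{sign}(\hat{R}_1-1)$, is genuinely more than the paper proves: the paper never establishes this directly, instead reading the location of the biological branch off the sign of the normal-form coefficient. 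Your argument via $H(V)=h(V)-\tfrac{b\lambda\beta V}{\delta(\rho+\beta V)}$ is clean and gives the ``non-biological for $\hat{R}_1<1$'' clause for all parameter values, not just near the bifurcation. Your option~(b), exchange of stability, is a legitimate shortcut once a transcritical crossing is known; the one place to tighten is the final nondegeneracy justification --- monotonicity of $V_3$ in $\psi$ does not by itself give $dC_3/d\psi\neq 0$ at $\psi^{*}$, since $C_3$ depends on $\psi$ both directly and through $V_3$. The quickest fix is to note that your Step~1 plus the linearity of $\hat{R}_1$ in $\psi$ already force $C_3$ to change sign at $\psi^{*}$ with the branches $\mathbf{E}_1$ and $\mathbf{E}_3$ distinct on either side, which together with the simple zero eigenvalue rules out the degenerate cases; alternatively, just carry out option~(a), which is what the paper does.
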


\begin{proof}
It is easy to see that the Jacobian at $\mathbf{E}_1$---given by eq. \eqref{Dg}---has a simple zero eigenvalue when $\hat{R}_1=1$. The right and left null vectors of $\left.D\mathbf{g}(\mathbf{E}_1)\right|_{\hat{R}_1=1}$ are $w = (1,w_2,w_3,w_4,w_5)^T$ and $v= (1,0,0,0,0)$ respectively, where
\begin{equation}
w_3 =- \frac{k\lambda(2\alpha f \delta \rho \omega +\beta \gamma \mu(\lambda\psi-\delta\omega)}{\delta^2(\beta\gamma\mu(\lambda \psi-\delta\omega)+f\alpha\rho(\lambda\psi+\delta\omega))},
\end{equation}
which is negative since $\lambda\psi>\delta\omega$. Because of the exact structure of $\left.D\mathbf{g}(\mathbf{E}_1)\right|_{\hat{R}_1=1}$, the expressions for $w_2$, $w_4$, and $w_5$ are not important, but for completion they are shown in the supplementary material. From here, we follow \citet{vandenDriesscheWatmough2002}. Let $p$ be one of the parameters that defines $\hat{R}_1$. By Lemma \ref{lem:transverse}, $\frac{\partial\hat{R}_1}{\partial p} \neq0$, and so the transversality condition,
\begin{equation}
    v\cdot \left.D_{xp}\mathbf{g}(\mathbf{E}_1)\right|_{\hat{R}_1=1}\cdot w = \omega \frac{\partial \hat{R}_1}{\partial p} \neq 0,
\end{equation}
is satisfied. Checking the nondegeneracy condition, we have
\begin{equation}
    v\cdot \left.D_{xx}\mathbf{g}(\mathbf{E}_1)\right|_{\hat{R}_1=1}\cdot w^2 = \psi w_3 <0,
\end{equation}
and thus the bifurcation at $\hat{R}_1 =1$ is a transcritical bifurcation. The sign of $\partial \hat{R}_1/\partial p$ corresponds to the stability of $\mathbf{E}_1$ for $p<p^*$, where $p^*$ is the value of $p$ such that $\left.\hat{R}_1\right|_{p=p^*} =1$. If $\partial \hat{R}_1/\partial p> 0$, then $\mathbf{E}_1$ is stable for $p<p^*$ ($\hat{R}_1<1$), and if $\partial \hat{R}_1/\partial p <0$, then $\mathbf{E}_1$ is unstable for $p<p^*$ ($\hat{R}_1>1$). Thus the transcritical bifurcation is `forward' in the sense that $\mathbf{E}_3$ is biological and locally asymptotically stable for $\hat{R}_1$ in a neighbourhood of the form $(1,1+\epsilon)$ with $\epsilon>0$ sufficiently small.
\end{proof}

{\bf Remark:} Theorem \ref{E1E3} only guarantees stability of $\mathbf{E}_3$ for $\hat{R}_1$ in a neighborhood of the form $(1,1+\epsilon)$ for $\epsilon>0$ sufficiently small. The characteristic equation of the Jacobian at $\mathbf{E}_3$ is a fifth-order polynomial that we were unable to apply the Routh-Hurwitz criterion to. However, we can see from the determininant of the Jacobian,
\begin{equation}
    \det(D\mathbf{g}(\mathbf{E}_3)) = -(\rho+\beta V_3)(R_1-1)(2V_3\alpha f +\gamma \mu),
\end{equation}
that there is only a zero eigenvalue if $R_1=1$, and thus that $\mathbf{E}_3$ does not undergo any further transcritical or saddle-node bifurcations. However, it is not clear that $\mathbf{E}_3$ does not lose stability in a Hopf-bifurcation or in some other, more exotic way. We did not observe any Hopf-bifurcations or more complex dynamics in our numerical exploration.

\section{Numerical Results and Application to EIAV Infection}
\label{NumRes}

In this section, we use bifurcation analysis to explore the EIAV system numerically and determine which parameters play key roles in the system.
Values of immune system parameters $k,\, f,\, \psi,\, \omega,\,$ and $\mu$ were obtained from \citet{Schwartzetal2013}, while the other parameters were taken from a simplified model fitted to data from horses experimentally infected with EIAV\citep{Schwartzetal2018} (Table 1.1).  Equilibrium $\mathbf{E}_0$ corresponds to the infection-free equilibrium (IFE) in which infection does not persist. Boundary equilibrium $\mathbf{E}_1$ corresponds to EIAV infections with only an antibody response, which have not been observed among infected horses.
Interior endemic equilibrium $\mathbf{E}_3$ corresponds to an infection that persists but is controlled at a manageable level by both CTLs and antibodies, which is what is observed in horses \citep{Lerouxetal2004}. For this set of parameters, we have $\lambda\psi>\delta\omega$ and Theorem \ref{E1E3} applies.

{\begin{center}Table 1.1: Parameter Values Used In Numerical Results
\begin{table}[h]

\label{TabA}       
\begin{tabular}{llll}
\hline\noalign{\smallskip}
Symbol & Definition & Value & Units  \\
\noalign{\smallskip}\noalign{\smallskip}\hline\noalign{\smallskip}
$\alpha$ & antibody production rate & 15 & molecules/(vRNA$\cdot$day) \\
$\beta$ & infectivity rate & 0.000325 & $\mu$l/(vRNA$\cdot$day) \\
$\gamma$ & virus clearance rate & 6.73 & 1/day \\
$\delta$ & infected cell death rate & 0.0476 & 1/day \\
$\lambda$ & uninfected cell arrival rate & 2.019 & cells/($\mu$l$\cdot$day) \\
$\mu$ & antibody clearance rate & 20 & 1/day \\
$\rho$ & uninfected cell death rate & 0.0476 & 1/day \\
$\psi$ & CTL production rate & 0.75 & $\mu$l/(cell$\cdot$day) \\
$\omega$ & CTL death rate & 5 & 1/day \\
$b$ & virus production rate & 505 & vRNA/(cell$\cdot$day) \\
$f$ & antibody neutralization rate & 3 & $\mu$l/(molecule$\cdot$day) \\
$k$ & rate of killing by CTLs & 0.01 & $\mu$l/(cell$\cdot$day) \\
\noalign{\smallskip}\hline
\end{tabular}%
\caption{Parameter Values Used In Numerical Results}
\end{table}
\end{center}}

Figure \ref{fig:betaV} uses the viral infectivity $\beta$ as the control parameter, and the virus particle concentration $V$ is shown on the vertical axis.  The curve of IFE $\mathbf{E}_0$ intersects the curve of boundary equilibrium $\mathbf{E}_1$ when the basic reproduction number $R_0=1$.  This occurs when $\beta = \beta_0 \approx 1.50 \times 10^{-5} \mu$l/(vRNA$\cdot$day).  There is a forward transcritical bifurcation at the intersection; the infection free equilibrium $\mathbf{E}_0$ is stable for $\beta<\beta_0$ and unstable for $\beta>\beta_0$. The antibody-only equilibrium $\mathbf{E}_1$ is non-biological for $\beta<\beta_0$ and is stable for $\beta>\beta_0$. As the infectivity $\beta$ increases beyond $\beta_0$, the equilibrium virus concentration $V$ increases.  Another forward transcritical bifurcation takes place when $\beta = \beta_1\approx2.38\times10^{-4}\mu$l/(vRNA$\cdot$day). The antibody-only equilibrium $\mathbf{E}_1$ is stable for $\beta<\beta_1$ and unstable for $\beta>\beta_1$.  The coexistence equilibrium $\mathbf{E}_3$, in which the antibody and CTL responses coexist, is non-biological for $\beta<\beta_1$ and is stable for $\beta>\beta_1$. As the infectivity $\beta$ increases beyond $\beta_1$, the equilibrium virus concentration $V$ remains steady.

\begin{figure}[ht]
    \centering \includegraphics[width=\textwidth]{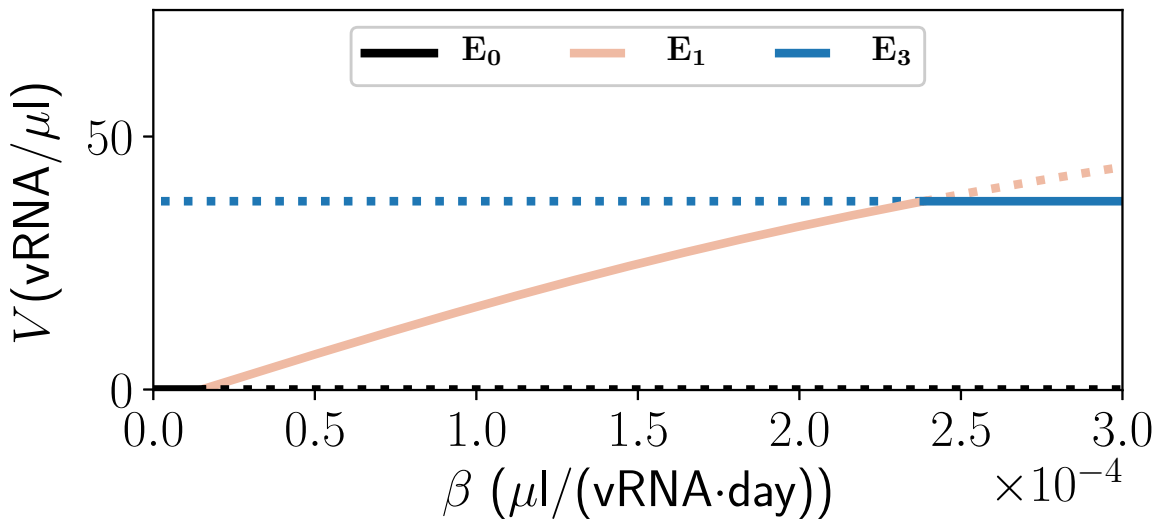}
    \caption{ Equilibrium values of the virus concentration $V$ as a function of the viral infectivity $\beta$.  The solid lines indicate when an equilibrium is stable, while the dotted lines indicate when the equilibrium is unstable. When $\mathbf{E}_3$ is unstable, it is also non-biological, and when $\mathbf{E}_3$ is stable, it is biological. Parameter values (except for $\beta$) are given in Table 1.1.}
    \label{fig:betaV}
\end{figure}

Similarly, Figure \ref{fig:bV} shows the equilibrium virus concentration using the virus production rate $b$ as the control parameter. There is a forward transcritical bifurcation at the intersection of the IFE $\mathbf{E}_0$ and boundary equilibrium $\mathbf{E}_1$ curves, which occurs when $b = b_0\approx 20$ vRNA/(cell$\cdot$day). Another forward transcritical bifurcation occurs at the intersection of the boundary equilibrium $\mathbf{E}_1$ and interior equilibrium $\mathbf{E}_3$ curves, which occurs at $b=b_1\approx 213$ vRNA/(cell$\cdot$day).

Figures \ref{fig:betaV} and \ref{fig:bV} show how different values of parameters $\beta$ and $b$ drive the system to different equilibria.  Lower viral infectivity ($\beta$) and lower production of virus ($b$) correspond to lower levels of virus $V$ and stability of antibody-only equilibrium $\mathbf{E}_1$.  Alternatively, greater infectivity ($\beta$) and greater virus production ($b$) give higher virus levels $V$ and stability of the coexistence equilibrium $\mathbf{E}_3$.

\begin{figure}[ht]
    \centering
\includegraphics[width=\textwidth]{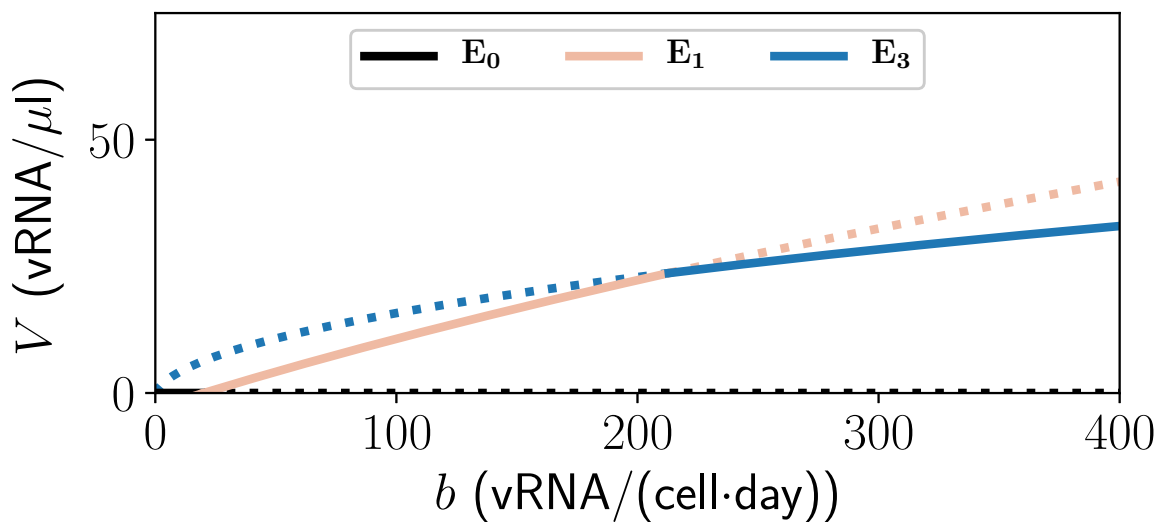}
    \caption{Equilibrium values of the virus concentration $V$ as a function of the virus production rate $b$.  The solid lines indicate when an equilibrium is stable, while the dotted lines indicate when the equilibrium is unstable. When $\mathbf{E}_3$ is unstable, it is also non-biological. Parameter values (except for $b$) are given in Table 1.1.}
    \label{fig:bV}
\end{figure}

Figure \ref{fig:betaB} shows a two-parameter bifurcation diagram using viral infectivity $\beta$ and the virus production rate $b$ as bifurcation parameters. The first transcritical bifurcation occurs when $R_0 = 1$. Since $R_0$ depends on both $\beta$ and $b$, it is possible to solve for $b$ as a function of $\beta$, and the bifurcation appears as a decreasing curve in the $(\beta,b)$ plane, separating the region where $\mathbf{E}_0$ is stable from the region where $\mathbf{E}_1$ is stable.  The second transcritical bifurcation occurs when $R_1 = 1$ and separates the region where $\mathbf{E}_1$ is stable from the region where $\mathbf{E}_3$ is stable.
Note that if the value of the infectivity ($\beta$) or virus production rate ($b$) is low enough, then IFE $\mathbf{E}_0$ can be reached with a broad range of values in the other parameter.

\begin{figure}[h]
\centering
    \includegraphics[width=\textwidth]{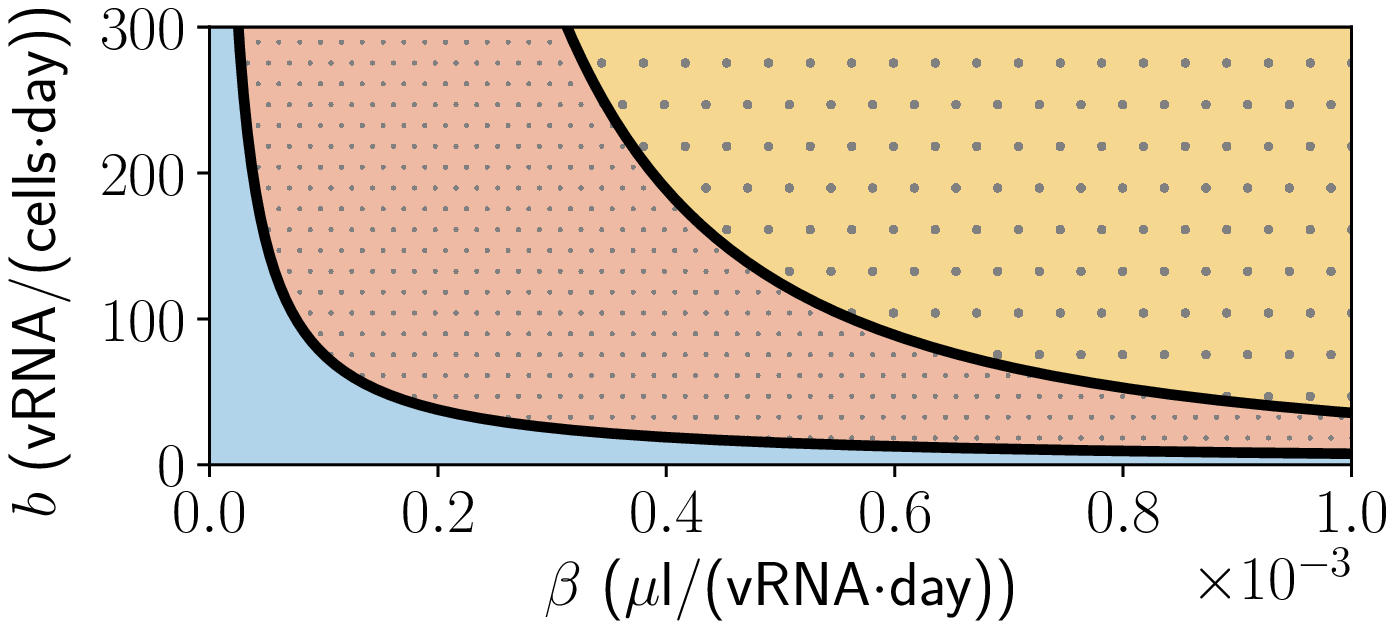}
    \caption{Two-parameter bifurcation diagram showing how the critical values of the virus production rate ($b$) depend on infectivity ($\beta$). The black curve on the left represents the first transcritical bifurcation, when $R_0=1$. In the blue area to the left of and below this line, the IFE $\mathbf{E}_0$ is stable. The black curve on the right represents the second transcritical bifurcation, when $R_1=1$. In the tightly dotted red region between these two curves, the boundary equilibrium $\mathbf{E}_1$ is stable, and in the loosely dotted yellow region to the right of this curve, the endemic equilibrium $\mathbf{E}_3$ is stable. Parameter values (other than $\beta$ and $b$) are taken from Table 1.1.}
    \label{fig:betaB}
\end{figure}

The implication of the results shown in Figures \ref{fig:betaV}, \ref{fig:bV} and \ref{fig:betaB} is that modification of the infectivity $\beta$ or virus production rate $b$, such as by using antiretroviral therapies (ART) that block infection of cells or inhibit production of virus, respectively, can theoretically shift the system to a more preferred state, such as the IFE  $\mathbf{E}_0$.
ART, however, is not used in practice for treating EIAV-infected horses, whose immune systems manage the persistent infection without symptoms throughout most of their lives.  Thus, we next investigate how to shift the system's equilibria by modifying the immune system parameters. In general, immune responses can be boosted by vaccination. Consequently, we examine the CTL production rate $\psi$ and the antibody production rate $\alpha$.

Figure \ref{fig:psi}~
shows the $\mathbf{E}_1$-$\mathbf{E}_3$ bifurcation with CTL production rate $\psi$ as the control parameter.

The antibody-only equilibrium $\mathbf{E}_1$ is stable for $\psi<\psi_1$ and is unstable for $\psi>\psi_1$, with $\psi_1 \approx 0.48 \mu$l/(cell$\cdot$day).   The coexistence equilibrium $\mathbf{E}_3$ is non-biological for $\psi<\psi_1$ and is stable for $\psi>\psi_1$. As the value of $\psi$ increases above $\psi_1$, the equilibrium CTL concentration $C$ increases above zero (Fig. \ref{fig:psiC}) and the equilibrium infected cell concentration $I$ decreases (Fig. \ref{fig:psiI}).  This is consistent with the known function of CTLs, whose role is killing infected cells. In other words, the greater the production of CTLs, the higher the CTL level and the lower the number of infected cells. Equilibrium $\mathbf{E}_3$ is characterized by the presence of both CTLs and antibodies, which is the condition that gives rise to control of virus infection in EIAV-infected horses.
\begin{figure}[ht]
   \begin{subfigure}{0.5\textwidth}   \includegraphics[width=\textwidth]{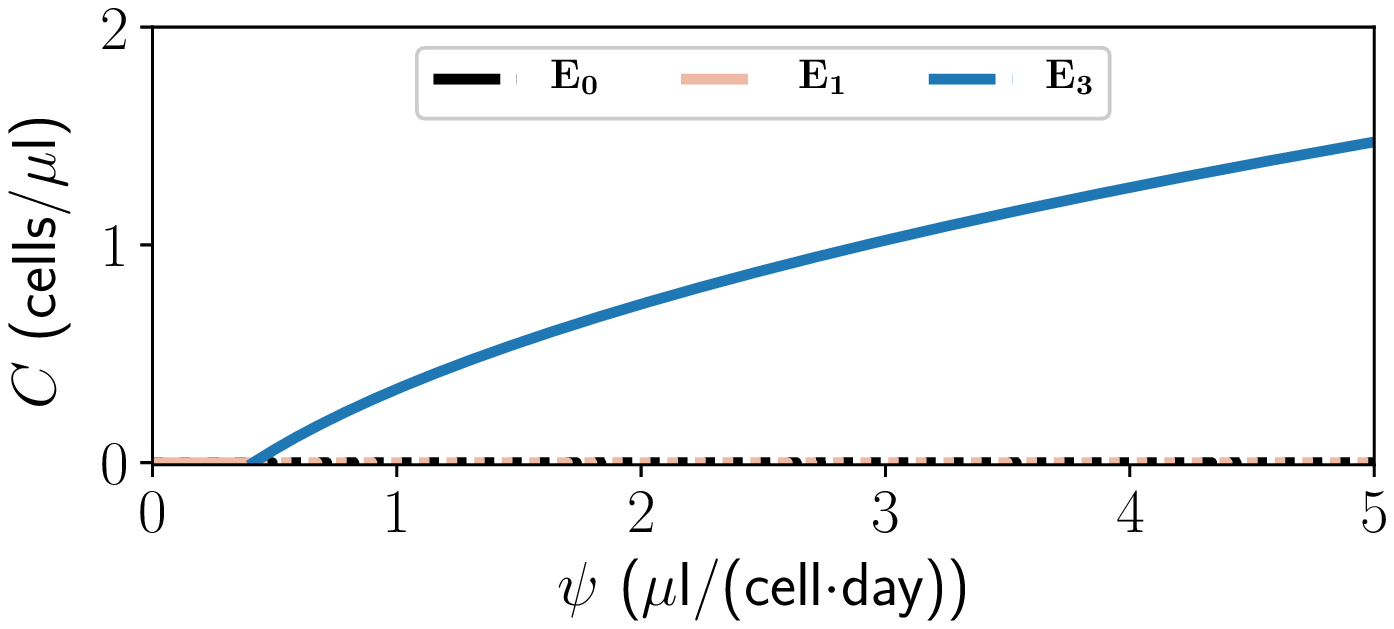}\caption{}\label{fig:psiC}
   \end{subfigure}
   \begin{subfigure}{0.5\textwidth}
   { \includegraphics[width=\textwidth]{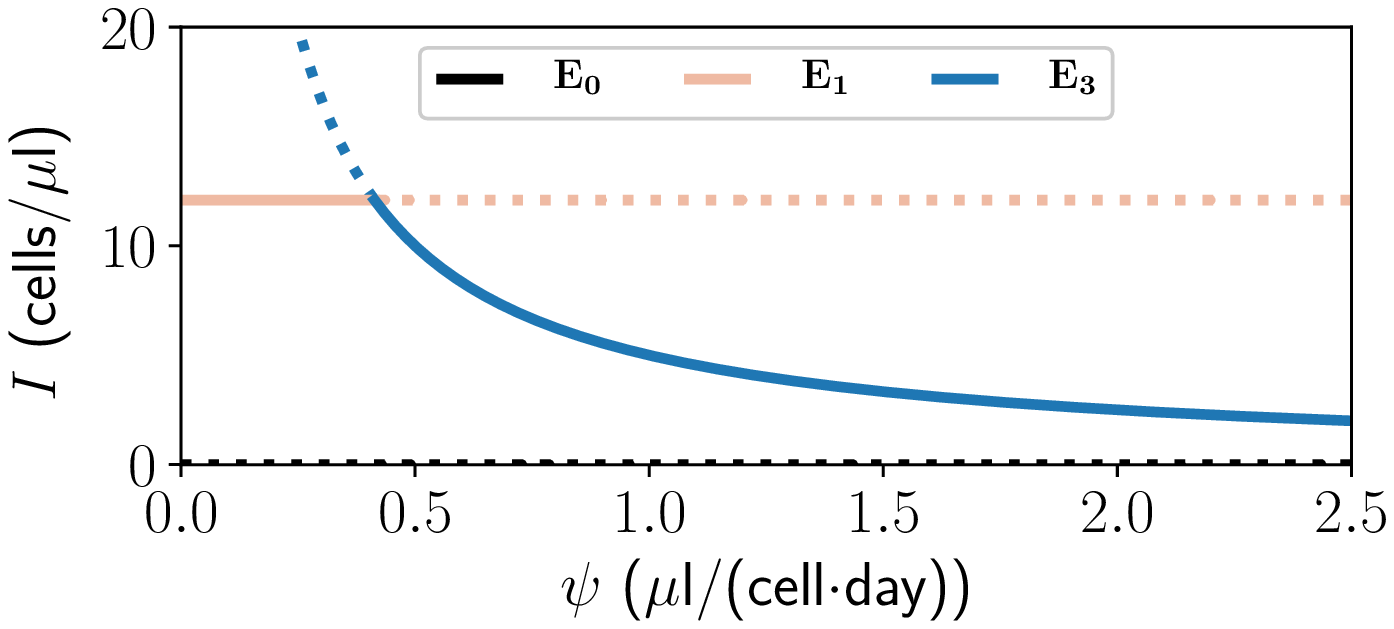}\caption{}\label{fig:psiI}}
   \end{subfigure}
    \caption{(a) Equilibrium values of the CTL concentration $C$ as a function of the CTL production coefficient $\psi$. (b) Equilibrium values of the infected cell concentration $I$ as a function of the CTL production coefficient $\psi$.  The solid lines indicate when an equilibrium is stable, while the dotted lines indicate when the equilibrium is unstable. When $\mathbf{E_3}$ is unstable, it is also non-biological. Parameter values (except for $\psi$) are given in Table 1.1. \label{fig:psi}}
\end{figure}

Figure \ref{fig:alpha}
shows the $\mathbf{E}_1$-$\mathbf{E}_3$ bifurcation with antibody production rate $\alpha$ as the control parameter and virus particle concentration $V$ (Fig. \ref{fig:alphaV}) and antibody concentration $A$ (Fig. \ref{fig:alphaA}) on the vertical axes.  When the antibody production rate ($\alpha$) takes on lower values (less than $\approx$38 molecules/(vRNA$\cdot$day)), the equilibrium viral load $V$ is higher (Fig. \ref{fig:alphaV}), the equilibrium antibody level $A$ is lower (Fig. \ref{fig:alphaA}), and the system is driven to stability of $\mathbf{E}_3$ (characterized by the existence not only of antibodies but also CTLs). When antibody production $\alpha$ takes on higher values (greater than $\approx$38 molecules/(vRNA$\cdot$day)), the equilibrium viral load $V$ is lower (Fig. \ref{fig:alphaV}), the equilibrium antibody level $A$ plateaus (Fig. \ref{fig:alphaA}), and the antibody-only equilibrium $\mathbf{E}_1$ is stable.  This result suggests that an antibody response that moderately reduces, but does not strongly reduce, the virus, is consistent with stability of coexistence equilibrium $\mathbf{E}_3$.

\begin{figure}[ht]
\centering
\begin{subfigure}{0.48\textwidth}
    \includegraphics[width=\textwidth]{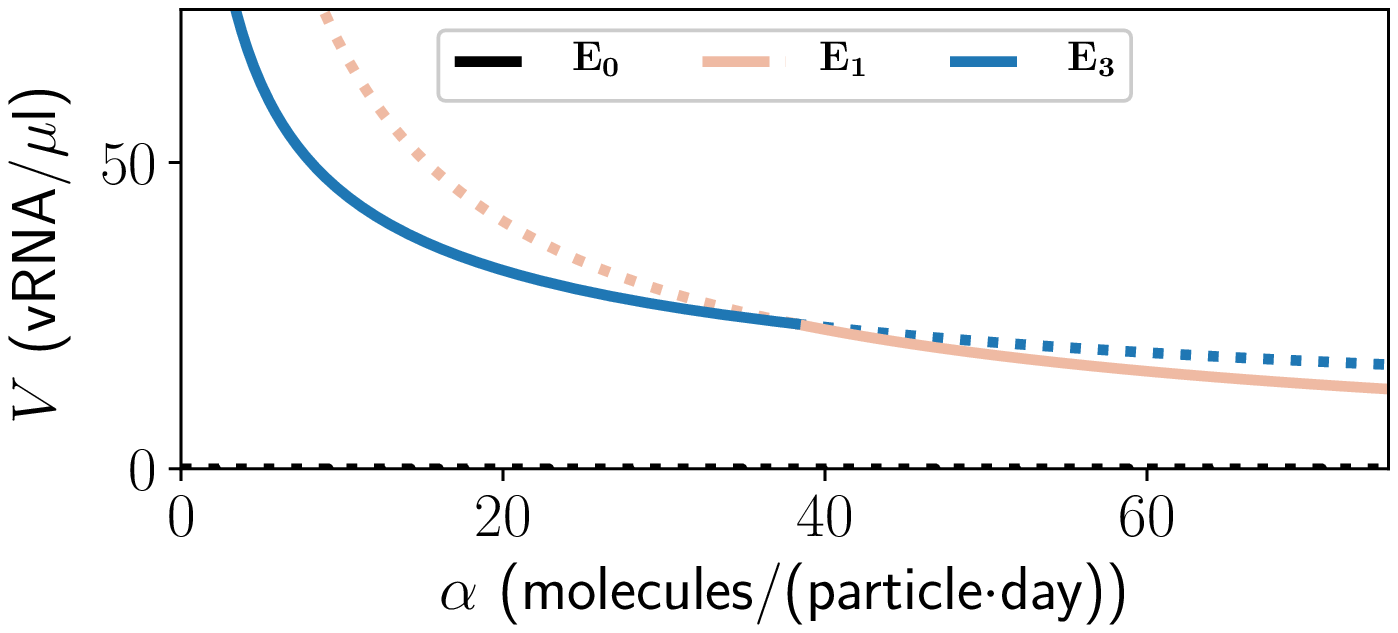}\caption{}
    \label{fig:alphaV}
    \end{subfigure}
    \begin{subfigure}{0.48\textwidth}
    \includegraphics[width=\textwidth]{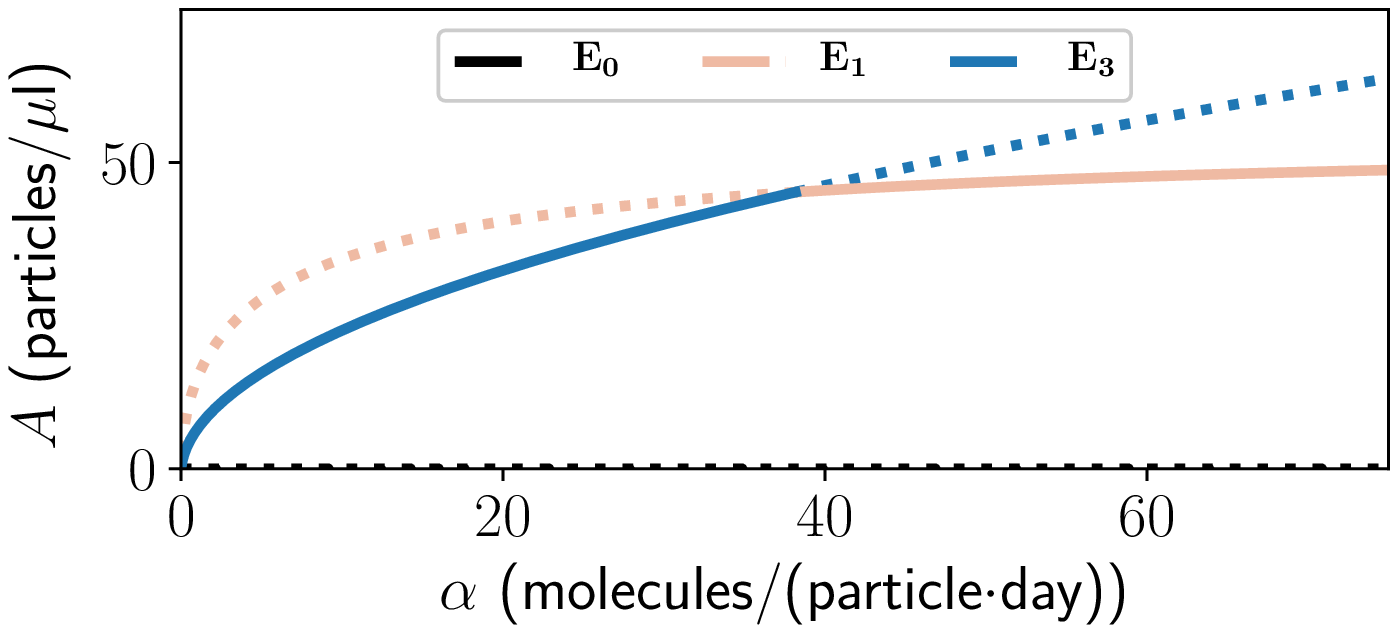}\caption{}
    \label{fig:alphaA}
    \end{subfigure}
    \caption{(a) Equilibrium values of the virus concentration $V$ as a function of the antibody production rate $\alpha$. (b) Equilibrium values of the antibody concentration $A$ as a function of the antibody production rate $\alpha$.  The solid lines indicate when an equilibrium is stable, while the dotted lines indicate when the equilibrium is unstable. When $\mathbf{E}_3$ is unstable, it is also non-biological. Parameter values (except for $\alpha$) are given in Table 1.1.} \label{fig:alpha}
\end{figure}

Figure \ref{fig:alphapsi}~
shows a two-parameter bifurcation diagram using the antibody production rate $\alpha$ and the CTL production rate $\psi$ as bifurcation parameters.The second transcritical bifurcation, occurring when $R_1 = 1$, is an increasing curve in the $\alpha, \psi$ plane, separating the region where $\mathbf{E}_1$ is stable from the region where $\mathbf{E}_3$ is stable.  Since $R_1$ depends on both $\psi$ and $\alpha$, it is possible to solve for $\psi$ as a function of $\alpha$, appearing almost as an inverse relationship, where higher values of $\alpha$ require even higher values of $\psi$ for the stability of $\mathbf{E}_3$.
\begin{figure}[h]
    \centering
    \includegraphics[width=0.98\textwidth]{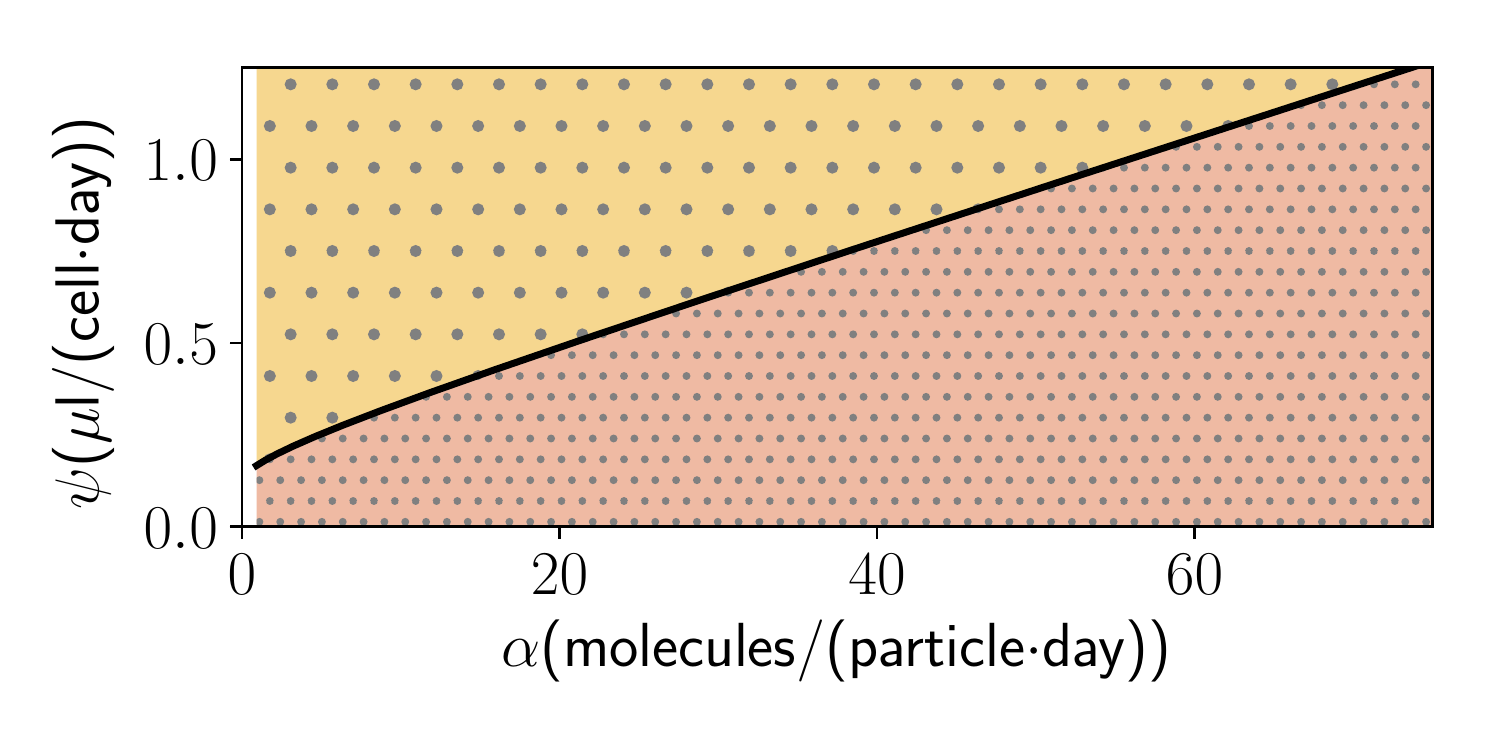}
    \caption{Two parameter bifurcation diagram showing how the critical value of the CTL production rate $\psi$ depends on the antibody production rate $\alpha$. The curve represents the second transcritical bifurcation, when $R_1=1$. In the tightly dotted red area below the curve, the boundary equilibrium $\mathbf{E}_1$ is stable, and in the loosely dotted yellow area above this curve, the endemic equilibrium $\mathbf{E}_3$ is stable. Parameter values (other than $\alpha$ and $\psi$) are taken from Table 1.1.}
    \label{fig:alphapsi}
\end{figure}

This result suggests that for stability of coexistence equilibrium $\mathbf{E}_3$,
the required value of the CTL production rate $\psi$ increases with increasing antibody production rate $\alpha$.  In other words, any increase in antibody production should be coupled with an increase in CTL production; otherwise the system is driven toward $\mathbf{E}_1$, with antibody responses and no CTLs.  In summary, these numerical results may seem somewhat counter-intuitive:
To obtain stability of $\mathbf{E}_3$, characterized by the coexistence of both antibody and CTL responses, the desired immune response has lower $\alpha$ (i.e., less production of antibodies), and greater $\psi$ ({\it i.e.,} greater production of CTLs).  A very strong antibody production ({\it i.e.,} high $\alpha$) associated with $\psi<\psi_1$, however, correlates with an absence of CTLs.

\section{Discussion}

In this paper we analyze the equilibrium states of a virus infection model with immune system responses in the form of antibodies and cytotoxic T lymphocytes (CTLs). Using a standard Lyapunov function argument, we show that the infection free equilibrium $\mathbf{E}_0$ is globally asymptotically stable when the basic reproductive number $R_0$ is less than one. When $R_0=1$ there is a forward transcritical bifurcation where the infection free equilibrium loses stability to the boundary equilibrium $\mathbf{E}_1$, which describes an infection that is controlled by antibodies but not CTLs. Using the next generation matrix method by \citet{Diekmannetal1990}, \citet{vandenDriesscheWatmough2002}, and \citet{vandenDriessche2017}, we derive a reproductive number for CTLs, $\hat{R}_1$, and show that $\mathbf{E}_1$ is locally asymptotically stable when $\hat{R}_1<1<R_0$. When $\hat{R}_1=1$ there is a second forward transcritical bifurcation where the boundary equilibrium loses stability to the endemic equilibrium $\mathbf{E}_3$, which describes an infection that is controlled by both antibodies and CTLs. We are unable to show that $\mathbf{E}_3$ remains locally stable as $\hat{R}_1$ increases, but our numerical analysis suggests that this is the case.

Our results are similar to those of \citet{Gomez-AcevedoLi2010}, who examine a three-equation model of infection by Human T cell Leukemia/Lymphoma virus, HTLV.  They obtain a basic reproduction number $R_0$ for infected cells and a second threshold $R_1$, which they also interpret as a basic reproduction number for CTLs.

Their theorem 3.1 corresponds to our results relating the stability of the equilibria to the basic reproduction numbers $R_0$ and $\hat{R}_1$,
consistent with our analysis (of the bifurcations between $\mathbf{E}_0$, $\mathbf{E}_1$, and $\mathbf{E}_3$) showing that $\hat{R}_1$ is precisely the basic reproduction number of CTLs.

The numerical results presented here offer insights into the potential relationships between immune response parameters in EIAV infection.  Furthermore, such insights have implications for vaccine development. For instance, an antibody response that moderately, but not strongly, reduces virus is consistent with stability of the coexistence equilibrium $\mathbf{E}_3$ ({\it i.e.,} control of virus infection with CTLs and antibodies). A vaccine, therefore, that aims to stimulate antibody production modestly would drive the system to this state of control of infection.  This would lead to a lower antibody concentration and a higher virus concentration, which may seem counter-intuitive, since conventional wisdom would presume that a vaccine that stimulates more antibody production would lead to greater reduction of virus and more control. However, the results shown here describe how the antibody response must be tempered in order to allow for the coexistence of a CTL response. Thus, a potential vaccine that stimulates the production of antibodies would need to stimulate the production of CTLs as well.  Overstimulation of antibodies would drive the system to the equilibrium state devoid of CTLs. Consequently, a vaccine intended to control virus infection by stimulating both immune responses would aim to favor the CTL response in order to balance the antibody response.  An ideal vaccine
would accomplish this balance.

Some limitations of the work presented here should be discussed.  In this work, we are motivated by the finding that EIAV infection (unlike HIV) is controlled by the host adaptive immune response, and this control is mediated by both antibody and CTL responses.  Thus our goal here is to use the knowledge that both responses are crucial for control as the basis to explore the asymptotic behavior of a model that considers each response’s dynamics separately. However, other models more explicitly describe the clonal expansion of the antibody response as well as the kinetics of CTL growth \citep{antia2005role}. Future work that addresses these modeling hurdles will help understand the role of complex immune responses.

Another caveat of this work is a reliance on deterministic population dynamics; stochastic interactions are not taken into account in this model.  In addition, this study does not consider spatial heterogeneity or diffusion.

Other studies in the literature do consider stochasticity in within-host dynamics \citep{Gibelli2017, SchwartzYangCumberlanddePillis2013}, as well as spatial heterogeneity and diffusion \citep{Gibelli2017, Bellomo2019}. \citet{Gibelli2017} expands upon the basic model \citep{NowakBangham1996, Perelson2002} by including population heterogeneity (in this case, in the age-distributed time of cell death and variation in the timing of eclipse phase dynamics).  \citet{Bellomo2019} takes into account spatial effects of the three populations of the basic model ({\it i.e.,} uninfected cells, infected cells, and virus), particularly the contributions of diffusion and movement by chemotaxis.  While research on stochastic dynamical systems shows that the deterministic structure of a model is often still strongly apparent with the addition of stochasticity \citep{Abbott2017}, future studies that use hybrid models that include stochastic and deterministic dynamics, and modeling that considers heterogeneity, will advance the field by leading to more precise depictions of the biological scenarios being modeled. Our model and analysis presented here may form the foundation of such future work.

\section{Acknowledgements}
We would like to thank Mark Schumaker for substantial input on an earlier draft of the paper. We would also like to thank Christina Cobbold, Adriana Dawes, Abba Gumel, Fabio Milner, Stacey Smith?, Rebecca Tyson, and Gail Wolkowicz for their suggestions on the mathematical and numerical analyses and conversations about this work.  We also thank two anonymous reviewers for their suggestions, which improved the paper.

This work was partially supported by the Simons Foundation and partially supported by the National Institute of General Medical Sciences of the National Institutes of Health under Award Number P20GM104420. The content is solely the responsibility of the authors and does not necessarily represent the official views of the National Institutes of Health.

\section{Appendices}
\subsection{Supplementary Material}

This section briefly describes the supplementary materials, which are Jupyter notebooks that fill in some of the details of the proofs in this paper. The DOI provides the link to this material online. \\

\noindent {\bf Notebook 1}:  In this notebook we develop the Routh-Hurwitz conditions for the stability of a fourth order polynomial by constructing the table described by \citet{Meinsma1995}. https://doi.org/10.7273/000002580 \\

\noindent {\bf Notebook 2}: In this notebook we calculate the characteristic polynomial associated with the Jacobian matrix $D\mathbf{g}(\mathbf{E}_1)$ given by eq. \ref{Dg}. We cast the coefficients of the characteristic polynomial in forms that are manifestly positive, and verify the Routh-Hurwitz criterion. The third order Routh-Hurwitz criterion comes out as a sum of 172 terms, of which two are negative. We show that squares can be completed, combining the negative terms with other terms in a form that is positive. https://doi.org/10.7273/000002581 \\

\noindent {\bf Notebook 3}: In this notebook we calculate the left and right nullvectors of the Jacobian matrix $D\mathbf{g}(\mathbf{E}_1)$ given by eq. \ref{Dg}.  We then cast the nondegeneracy condition in a form that is manifestly negative. https://doi.org/10.7273/000002582

\bibliographystyle{plainnat}
\bibliography{eiav}

\end{document}